\newtheorem*{observation*}{Observation}
\newcommand{\etal}{{et~al.}}
\newcommand{\ie}{{i.e.}}
\newcommand{\eg}{{e.g.}}
\newcommand{\NN}{\mathbb{N}} %  set of natural numbers
\def\Med{{\tt Med}}
\def\rank{{\rm rank}}
\def\p{{\tt prefix}}
\def\pred{{\tt pred}}
\def\succ{{\tt succ}}
\newcommand{\later}[1]{}
\newcommand{\old}[1]{}
\title{Multiparty Selection}
\titlerunning{Multiparty Selection}
\author{Ke Chen}
{Department of Computer Science,
University of Wisconsin--Milwaukee, USA}
{kechen@uwm.edu}
{0000-0001-5470-6621}
{}
\author{Adrian Dumitrescu}
{Department of Computer Science,
University of Wisconsin--Milwaukee, USA}
{dumitres@uwm.edu}
{0000-0002-1118-0321}%
{}
\authorrunning{K.~Chen, A.~Dumitrescu}
\keywords{approximate selection, 
mediocre element,
comparison algorithm, 
$i$-th order statistic,
tournaments,
quantiles,
communication complexity}
\begin{document}

\maketitle

\begin{abstract}
Given a sequence $A$ of $n$ numbers and an integer (target) parameter $1\leq i\leq n$, 
the (\emph{exact}) selection problem asks to find the $i$-th smallest element in $A$. 
An element is said to be $(i,j)$-\emph{mediocre} if it is neither among the top $i$
nor among the bottom $j$ elements of $S$. The \emph{approximate} selection problem asks to
find a $(i,j)$-mediocre element for some given $i,j$; as such, this variant allows
the algorithm to return any element in a prescribed range.
In the first part, we revisit the selection problem in the two-party model introduced
by Andrew Yao (1979) and then extend our study of exact selection to the multiparty model.
In the second part, we deduce some communication complexity benefits that arise
in approximate selection. In particular, we present a deterministic protocol for finding
an approximate median among $k$ players. 
\end{abstract}

%\thispagestyle{empty}

%\newpage

\section{Introduction} \label{sec:intro}

Given a sequence $A$ of $n$ numbers and an integer (selection) parameter $1\leq i\leq n$, 
the selection problem asks to find the $i$-th smallest element in $A$.
If the $n$ elements are distinct, the $i$-th smallest is larger than
$i-1$ elements of $A$ and smaller than the other $n-i$ elements of $A$.
By symmetry, the problems of determining the $i$-th smallest and the $i$-th largest
are equivalent. 
Together with sorting, the selection problem is one of the most fundamental problems in
computer science.  Whereas sorting trivially solves the selection problem in $O(n \log{n})$ time,
Blum~\etal~\cite{BFP+73} gave an $O(n)$-time algorithm for this problem.

The selection problem, and computing the median in particular are in close relation
with the problem of finding the quantiles of a set. 
The $h$-th \emph{quantiles} of an $n$-element
set are the $h-1$ order statistics that divide the sorted set in $h$
equal-sized groups (to within $1$); see, \eg, \cite[p.~223]{CLRS09}.
The $h$-th quantiles of a set can be computed by a recursive algorithm running
in $O(n \log{h})$ time.

The selection problem, determining the median in particular, have been also considered
from the perspective of communication complexity in the \emph{two-party} model introduced
by Andrew Yao~\cite{Yao79}.
Suppose that Alice and Bob hold subsets $A$ and $B$ of $[n]=\{1,2,\ldots,n\}$, respectively,
and wish to determine the median of the multiset $A \cup B$ while keeping their communication
close to a minimum. Several classic protocols going back to 1980s achieve this task by exchanging
$O(\log^2{n})$ bits~\cite{KN97,Ro82}. The communication complexity for this task
has been subsequently reduced to $O(\log{n})$ bits~\cite{KN97,RY19}.

\subparagraph{Mediocre elements.}
Following Frances Yao~\cite{Yao74}, an element is said to be $(i,j)$-\emph{mediocre} if it
is neither among the top (\ie, largest) $i$ nor among the bottom (\ie, smallest) $j$
of a totally ordered set $S$ of $n$ elements.
As remarked by Yao, finding a mediocre element is closely related to finding the median,
in the sense that the common goal is selecting an
element that is not too close to either extreme. In particular,
$(i,j)$-mediocre elements where
$i=\lfloor \frac{n-1}{2} \rfloor$, $j= \lfloor \frac{n}{2} \rfloor$
(and symmetrically exchanged), are medians of~$S$. 
Previous work on \emph{approximate selection} (in this sense) includes~\cite{BCC+00,Du19}.

In Section~\ref{sec:approx-2} we study the communication complexity of finding a
mediocre element in the two-party model introduced by Andrew Yao~\cite{Yao79}.
The communication complexity of finding the median in this model has
been studied in~\cite{Ro82,CT87,Ma93}; see also~\cite{KN97,RY19}. 
In particular we outline a scenario in which computing a mediocre element
near the median can be accomplished with communication complexity~$O(1)$---which is
very attractive.

\subparagraph{Background and related problems.}
Due to its primary importance, the selection problem has been studied extensively;
see for instance~\cite{AKSS89,BCC+00,BJ85,CM89,DHU+01,DZ96,DZ99,FR75,FG79,
  HS69,Ho61,Hy76,J88,KKZZ18,Ki81,Pa96,SPP76,Yap76}.
A comprehensive review of early developments in selection is provided by Knuth~\cite{Kn98}.
The reader is also referred to dedicated book chapters on selection, such as those 
in~\cite{AHU83,Ba88,CLRS09,DPV08,KT06} and the more recent articles~\cite{CD15,Du16},
including experimental work~\cite{Al17}. 

In many applications (\eg, sorting), it is not important to find an exact median,
or any other precise order statistic, for that matter, and an approximate median
suffices~\cite{EW19}. For instance, quick-sort type algorithms aim at finding
a (not necessarily perfect) balanced partition rather quickly; see \eg, \cite{BCC+00,Ho61}.

\subparagraph{Our results.} Our main results are summarized in the three theorems stated below.
We first study the communication complexity of finding the median in the multiparty setting. 
In this model we assume that every message by one of the players is seen by all the players
(\ie, it is a broadcast); as in~\cite[p.~83]{KN97}. 

\begin{theorem} \label{thm:exact-k}
  For $i=1,\ldots,k$, let player $i$ hold a sequence (\ie, a multiset) $A_i$ whose
  support is a subset of $[n]$ and $|A_i|=O\left(\text{poly}(n)\right)$.
  There is a deterministic protocol for finding the median of $\uplus_{i=1}^k A_i$
  (\ie, their multiset sum) with $O(k \log^2{n})$ communication complexity.
\end{theorem}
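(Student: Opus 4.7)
My approach is to run a binary search on the value space $[1,n]$, distributed across the $k$ players: in each step every player broadcasts a single count and the players add them up. Because every element of every $A_i$ lies in $[n]$, the median is an integer in $[n]$, so $O(\log n)$ binary-search steps suffice.

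In a one-shot setup round, each player $i$ broadcasts $|A_i|$; as $|A_i| = n^{O(1)}$ has $O(\log n)$ bits, this costs $O(k\log n)$ bits and lets everyone compute $N=\sum_{i=1}^k |A_i|$ together with the target rank $r$ of the median of $\uplus_{i=1}^k A_i$. The players then maintain an interval $[\ell,h]\subseteq[1,n]$, initially $[1,n]$, guaranteed to contain the median value. In each round all parties compute $m=\lfloor(\ell+h)/2\rfloor$; every player $i$ locally evaluates $c_i=|\{a\in A_i : a\leq m\}|$ and broadcasts it, after which the sum $C=\sum_i c_i$ is the number of elements of the combined multiset that are $\leq m$. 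The parties shrink to $[\ell,m]$ if $C\geq r$ and to $[m+1,h]$ otherwise. After $\lceil\log_2 n\rceil$ rounds the interval collapses to a single value $x$, which is the median.

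Each round costs $k$ broadcasts of $O(\log n)$ bits (since $c_i\leq|A_i|=n^{O(1)}$), so the setup plus $O(\log n)$ rounds of the search total $O(k\log^2 n)$ bits, matching the claimed bound. The only subtlety is handling multiset ties---many copies of the median value could straddle rank $r$---but binary searching on values rather than on positions within any one player's list avoids this automatically: the procedure terminates at the unique $x\in[n]$ with $C(x-1)<r\leq C(x)$, which is by definition the rank-$r$ element of $\uplus_{i=1}^k A_i$. I do not anticipate a harder obstacle; the protocol is a direct $k$-party lift of the classical count-based two-party median protocol, paying exactly the factor of $k$ per round for the distributed sum.
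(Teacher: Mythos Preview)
Your proof is correct but takes a genuinely different route from the paper. You extend the \emph{first} (count-based) Karchmer protocol of Subsection~\ref{ssec:two} to $k$ parties: binary search on the value interval $[1,n]$, with each of the $O(\log n)$ rounds costing $O(k\log n)$ bits for the $k$ broadcast counts. The paper instead generalizes the \emph{second} Karchmer protocol: every player posts the median of his current (pruned) set, and in each round the two players holding the smallest and largest of these medians discard an equal number of elements from opposite ends; correctness hinges on a careful mix of upper and lower medians together with a counting lemma (Lemma~\ref{lem:discard}) guaranteeing that the discarded elements lie on opposite sides of the global median. That protocol runs for $O(k\log n)$ rounds of $O(\log n)$ bits each, so the total matches your $O(k\log^2 n)$. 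Your argument is considerably simpler---no parity bookkeeping, no auxiliary lemma---and has lower round complexity; indeed the paper itself uses essentially your counting protocol in Section~\ref{sec:approx-k} for the approximate problem. What the paper's pruning scheme buys is that it compares \emph{elements} rather than testing against value thresholds, so it would transfer unchanged to any totally ordered universe whose elements can be named in $O(\log n)$ bits, whereas your binary search relies specifically on the support being the integer interval $[n]$.
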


We then study the communication complexity of finding an approximate median
in the multiparty setting (under slightly stronger assumptions on the input sets). 

\begin{theorem} \label{thm:approx-k}
Let $\alpha=p/q$, where $p<q/2$, $q$ is fixed and $0<c \leq 1$ be a positive constant.
For $i=1,\ldots,k$, let player $i$ hold a set $A_i \subset [n]$ that is disjoint from
any other player's set.
Assume that $t=|\cup_{i=1}^k A_i| \geq cn$. Put $\ell= \lceil \log{\frac{2q}{c}} \rceil$.
Then an $(\alpha t, \alpha t)$-mediocre element of $\cup_{i=1}^k A_i$ can be found with
$O(\ell \cdot k \log{n}) =O(k \log{n})$ communication complexity.
\end{theorem}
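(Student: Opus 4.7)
The plan is to implement a parallel binary search on the value universe $[1, n]$, exploiting the fact that the sets $A_i$ are \emph{disjoint subsets} of $[n]$ in order to bound the number of rounds by $\ell$.

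\textbf{Protocol.} In a preliminary broadcast, each player $i$ sends $|A_i|$ using $O(\log n)$ bits, so every player learns $t = \sum_i |A_i|$. Maintain an interval $[L, R] \subseteq [1, n]$, initially $[1, n]$. At each round, set $v = \lfloor (L + R)/2 \rfloor$ and have every player $i$ broadcast $\rank_i(v) := |\{a \in A_i : a \leq v\}|$, costing $O(\log n)$ bits per player and $O(k \log n)$ per round. The players compute $\rank(v) = \sum_i \rank_i(v)$; if $\alpha t < \rank(v) < (1 - \alpha) t$, the protocol terminates; otherwise it updates $L \gets v + 1$ or $R \gets v - 1$ depending on whether $\rank(v)$ is too small or too large. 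Upon termination, one final round in which each player broadcasts the largest element of $A_i$ that is $\leq v$ identifies $a^\star \in \cup_i A_i$ with $\rank(a^\star) = \rank(v)$, which is therefore $(\alpha t, \alpha t)$-mediocre.

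\textbf{Bounding the number of rounds.} By the update rule, after $j$ non-terminating rounds one has $R - L + 1 \leq n/2^j$, and fewer than $\alpha t$ elements of $\cup_i A_i$ lie outside $[L, R]$ on either side, so at least $(1 - 2\alpha) t \geq t/q$ elements of $\cup_i A_i$ lie in $[L, R]$ (using that $p, q$ are integers with $q - 2p \geq 1$). Because the $A_i$'s are disjoint sets, each value in $[n]$ appears at most once in $\cup_i A_i$, so the number of such elements in $[L, R]$ is also at most $R - L + 1$. With $\ell = \lceil \log(2q/c) \rceil$ we obtain $n / 2^\ell \leq cn/(2q) \leq t/(2q) < t/q$, contradicting the lower bound. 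Hence the protocol terminates within $\ell$ rounds, for a total of $O(\ell \cdot k \log n) = O(k \log n)$ bits as claimed.

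\textbf{Main obstacle.} The crucial ingredient is the use of disjointness to improve the naive $\Theta(\log n)$-round binary search on the value range down to a constant number of rounds: without it (e.g.\ if the $A_i$'s were multisets), a single value could appear many times and the bound $|\cup_i A_i \cap [L, R]| \leq R - L + 1$ would fail, breaking the contradiction. The degenerate regime $t \leq 2q = O(1)$, where the inequality above is not yet contradictory, is handled separately by having every player broadcast $A_i$ in full at the outset, which still fits within $O(k \log n)$ bits in total.
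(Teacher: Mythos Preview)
Your proof is correct and follows essentially the same approach as the paper: a binary search on the value range $[1,n]$ in which every player broadcasts a rank count at each round, with the disjointness hypothesis used to cap the number of rounds at $\ell=O(1)$ via the bound $|(\cup_i A_i)\cap[L,R]|\le R-L+1$. The only cosmetic difference is the stopping rule---the paper maintains the invariant that the exact median lies in the current interval and halts once $b_j-a_j\le(1/2-\alpha)t$, whereas you halt as soon as the midpoint's rank lands in $(\alpha t,(1-\alpha)t)$ and derive the $\ell$-round bound by contradiction; both arguments hinge on the same counting observation.
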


In particular, a $(t/3, t/3)$-mediocre element, or a $(0.49 \, t, 0.49 \, t)$-mediocre element,
among $k$ players can be determined with $O(k \log{n})$ communication complexity.

In the final part of our paper, somewhat surprisingly, we show that
(under suitable additional assumptions and a somewhat relaxed requirement)
the communication complexity of finding a mediocre element in the vicinity of the median
is bounded from above by a constant and is therefore independent of $n$.

\begin{theorem} \label{thm:approx-2}
Let $\alpha=p/q$, where $p<q/2$, $q$ is fixed and $0<c \leq 1$ be a positive constant.
Let Alice and Bob hold disjoint sets $A$ and $B$ of elements from $[n]$,
where $s=|A| \leq |B|=m$.
Let $t=s+m$ denote the total number of elements in $A \cup B$, where $t \geq cn$.
Assume that $t$, $c$, and $\alpha$ are known to both of them.
Put $h = \lceil \frac{2q}{q-2p} \rceil$ and $\ell= \lceil \log{\frac{12h}{c}} \rceil$.

Then an $(\alpha t, \alpha t)$-mediocre element can be found (by at least one player)
with $O(\ell \, \log{h}) =O(1)$ communication complexity.
If both players return, each element returned is $(\alpha t, \alpha t)$-mediocre;
the elements found by the players need not be the same. 
\end{theorem}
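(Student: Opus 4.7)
The plan is to partition the universe $[n]$ into $N:=2^\ell$ equal-length consecutive intervals $I_1,\ldots,I_N$, to locate by approximate binary search a single bucket index $\hat j$ such that $I_{\hat j}$ is non-empty and every element of $I_{\hat j}$ is $(\alpha t,\alpha t)$-mediocre in $A\cup B$, and to have each player output any element of their own set that happens to lie in $I_{\hat j}$. The choice $\ell=\lceil\log(12h/c)\rceil$ gives $N\ge 12h/c$, and combined with $t\ge cn$ this delivers the key bucket-size bound: each $I_j$ holds at most $\lceil n/N\rceil\le t/(12h)=:\Delta$ elements of $A\cup B$, while the ``mediocre band'' $[\alpha t+1,(1-\alpha)t]$ of acceptable ranks has width $(1-2\alpha)t=2t/h=24\Delta$.

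First I would show that a good bucket always exists. Writing $c(j):=|(A\cup B)\cap(I_1\cup\cdots\cup I_j)|$, consider the smallest $j$ with $c(j)\ge\alpha t+\Delta$; the bucket-size bound gives $c(j-1)\ge c(j)-\Delta\ge\alpha t$ and $c(j)\le c(j-1)+\Delta<\alpha t+2\Delta<(1-\alpha)t$, so $I_j$ is non-empty and every element of it has rank in $[\alpha t+1,(1-\alpha)t]$.

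Next I would describe how to find such a $j$ using an $\ell$-round binary search over $\{1,\ldots,N\}$. At the round with current midpoint $m$, Alice sends her own cumulative count $|A\cap(I_1\cup\cdots\cup I_m)|$ rounded \emph{down} to a multiple of $\Delta/4$, and Bob does likewise for his set. Each reported value is a multiple of $\Delta/4$ in $[0,t]$---one of at most $4t/\Delta+1=O(h)$ possibilities---so it fits in $O(\log h)$ bits. Let $\tilde c(m)$ be the sum of the two reported values; then $c(m)-\Delta/2\le\tilde c(m)\le c(m)$. Using the fixed threshold $\tau:=\alpha t+5\Delta/4$, the search branches left when $\tilde c(m)\ge\tau$ and right otherwise, and after $\ell$ rounds both players know a single $\hat j$ satisfying $\tilde c(\hat j-1)<\tau\le\tilde c(\hat j)$. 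Because each player's rounded cumulative count is non-decreasing in $m$, the strict inequality $\tilde c(\hat j)>\tilde c(\hat j-1)$ forces $c(\hat j)>c(\hat j-1)$, so $I_{\hat j}$ is non-empty. The rounding bounds then give $c(\hat j)\ge\tau$ and $c(\hat j-1)<\tau+\Delta/2$, whence $c(\hat j-1)\ge c(\hat j)-\Delta\ge\alpha t+\Delta/4>\alpha t$ and $c(\hat j)\le c(\hat j-1)+\Delta<\alpha t+11\Delta/4<(1-\alpha)t$; thus every element of $I_{\hat j}$ is $(\alpha t,\alpha t)$-mediocre. Each player then independently scans their own set and outputs any element lying in $I_{\hat j}$; at least one does so, and any returned element is mediocre. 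The total communication is $\ell$ rounds $\times$ two players $\times$ $O(\log h)$ bits $=O(\ell\log h)=O(1)$.

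The hard part will be calibrating the approximate binary search so that the returned bucket is simultaneously (i) inside the mediocre band and (ii) non-empty. Claim~(i) is comfortably absorbed by the gap of $24\Delta$ between bucket size and band width, provided the threshold is placed at least $\Delta$ above $\alpha t$. Claim~(ii) is the genuinely delicate point, because approximate counts a priori cannot tell an empty interval from a small one; the escape is to round each player's cumulative count \emph{individually} (rather than rounding the sum), so that the estimator $\tilde c$ inherits monotonicity from the exact counts and the binary search cannot terminate on an empty bucket.
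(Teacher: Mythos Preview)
Your proposal is correct and takes a genuinely different route from the paper's. The paper first pads so that $|A|=|B|$, has each player extract a sample $Q_A,Q_B$ of $h$ evenly-spaced elements (near-quantiles) from their sorted set, and then runs the two-party median-finding protocol on $Q_A\cup Q_B$, comparing $\ell$-bit \emph{prefixes} of elements instead of full $\log n$-bit values; the output is the median of $Q_A\cup Q_B$ (or both players' current elements when their prefixes coincide), and a rank analysis shows any such element lies within $t/h\le(\tfrac12-\alpha)t$ of the global median. Your approach instead buckets the universe $[n]$ into $2^\ell$ equal intervals and binary-searches over bucket indices using cumulative counts rounded to multiples of $\Delta/4$; this is essentially the protocol of Theorem~\ref{thm:approx-k} sharpened by the observation that rounding the counts cuts the per-round cost from $O(\log n)$ to $O(\log h)$. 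Your method avoids both the padding step and the case analysis for prefix ties, and the trick of rounding each player's count \emph{separately} so that $\tilde c$ stays monotone is a clean way to force the terminal bucket to be non-empty---the paper gets non-emptiness for free because its output is always one of the sampled quantiles. Both arguments land at $O(\ell\log h)=O(1)$; yours is arguably more elementary and closer in spirit to Section~\ref{sec:approx-k}, while the paper's quantile-sampling viewpoint connects more directly to classical selection.
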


In particular, a $(t/3, t/3)$-mediocre element, or a $(0.49 \, t, 0.49 \, t)$-mediocre element,
between $2$ players can be determined (by at least one player) with $O(1)$ communication complexity.
A simple example that falls under the scenario in Theorem~\ref{thm:approx-2} is one
where $A$ consists of distinct odd numbers and $B$ consists of distinct even numbers. 

\subparagraph{Preliminaries.} 
A simple but effective procedure reduces the selection problem for finding the $i$-th smallest
element out of $n$ to one for finding the median in a slightly larger sequence.
The target is the $i$-th smallest element in an input sequence $A$ of size $n$. 
Assume first that $i<n/2$;
in this case pad the input $A$ with $n-2i$ elements that are less than or equal to
the minimum in the input sequence; call $A'$ resulting sequence.
Note that $|A'|=n+(n-2i)=2(n-i)$. 
It suffices to observe that the median of $A'$ is the $i$-th smallest element in $A$:
indeed, $n-2i+i=n-i$, as required.
The case $i>n/2$ is symmetric; in this case pad the input $A$ with $2i-n$ elements that are
larger than or equal to the maximum in the input sequence; call $A'$ resulting sequence.
Note that $|A'|=n+(2i-n)=2i$. Observe that the median of $A'$ is the $i$-th smallest element in $A$,
as required. We therefore restrict our attention to the median selection problem.

\subparagraph{Notation.} 
Without affecting the results, the floor and ceiling functions are omitted
in some instances where they are not essential.  
For example, we frequently write the $\alpha n$-th element instead of
the more precise $\lfloor \alpha n \rfloor$-th element. 
Unless specified otherwise, all logarithms are in base~$2$. 

%For two sets $A,B$, we write $A \leq B$ if $a \leq b$ for every $a \in A$ and $b \in B$.

For an $s$-bit number $x$ and a positive integer $\ell$, where $s \geq \ell$,
$\p_\ell(x)$ denotes the \emph{$\ell$-bit binary prefix} of $x$,
\ie, the number formed by the first (\ie, most significant) $\ell$ bits of $x$. 

If $x$ belongs to a sorted list and is not the minimum, $\pred(x)$ denotes its predecessor.
If $x$ belongs to a sorted list and is not the maximum, $\succ(x)$ denotes its successor.

\section{Exact selection} \label{sec:exact}

In this section we prove Theorem~\ref{thm:exact-k}. First, we set up the problem in the
context of two-party communication complexity; we start with some background. 
In this section, each player's input is allowed to contain duplicates.
Following the literature, we refer to these (potential) multisets as sets, and the union
operation should be understood as multiset sum~\cite[Example 1.6, p.~6]{KN97}.
(An equivalent formulation is \emph{merging of sequences}.)

\subsection{Two players} \label{ssec:two}

Alice and Bob hold multisets $A$ and $B$ whose supports are subsets 
of $[n]=\{1,2,\ldots,n]$, respectively.
  It is assumed that  $|A|,|B| =O\left(\text{poly}(n)\right)$.
  (In a standard setup~\cite[Example 1.6, p.~6]{KN97}, $A$ and $B$ are subsets of $[n]$;
  here we extend this setup for potentially larger multisets.)
The median of the multiset $A \cup B$ is denoted by $\xi=\Med(A,B)$;
as usual, the median of $X$ is the $\left\lceil \left(|X|/2 \right) \right\rceil$-th
smallest element of~$X$. 

There is a simple binary-search type protocol due to M.~Karchmer that takes
$O(\log^2{n})$ bits of communication; see~\cite[Example 1.6, p.~6]{KN97}. 
At each round Alice and Bob have an interval $[i,j]$, $i,j \in \NN$,
that contains the median. They halve the interval (repeatedly) by deciding
whether the median is less than, equal to, or larger than $m=(i+j)/2$.
This is done by Alice sending to Bob the number of elements in $A$
that are less than $m$, equal to $m$, and larger than $m$, using $O(\log{n})$ bits.
Bob can now determine whether the median is less than, equal to, or larger than $m$,
and sends this information to Alice using $O(1)$ bits. The protocol has $O(\log{n})$ rounds,
each requiring $O(\log{n})$ bits of communication, so the overall communication complexity
is $O(\log^2{n})$. 
  
An alternative binary-search type protocol that takes $O(\log^2{n})$ bits of communication,
also due to Karchmer~\cite[p.~168]{KN97}, works as follows.
Assume, without loss of generality that $|A|=|B|$ and that the common size is a power of $2$:
this can be achieved by exchanging the sizes of their inputs ($O(\log{n})$ bits) and padding them
with the appropriate number of the minimal element ($1$) and the maximal element ($n$). 
The protocol works in rounds. During the protocol, Alice maintains a set $A' \subset A$ 
of elements that may still be the median (initially $A'=A$) and Bob maintains
a set $B' \subset B$ of elements that may still be the median (initially $B'=B$).
At each round, Alice sends Bob the value $a$, which is the median of $A'$, and
Bob sends Alice the value $b$, which is the median of $B'$.
At this point we have $\min(a,b) \leq \xi \leq \max(a,b)$. 
If $a<b$, then Alice discards the lower half of $A'$ (note that $a$ is part of it)
and Bob discards the upper half of $B'$.
If $b<a$, then Bob discards the lower half of $B'$ (note that $b$ is part of it)
and Alice discards the upper half of $A'$. 
In either case, this operation maintains the median of $A' \cup B'$ as the desired median of
$A \cup B$. It should be noted that the size of $A' \cup B'$ is reduced (exactly) by a factor of $2$. 
If $a=b$, this value is the median, and if $|A'|=|B'|=1$, then the smaller number is the median.
The protocol has $O(\log{n})$ rounds, each requiring $O(\log{n})$ bits of communication,
and so the communication complexity is $O(\log^2{n})$.

The  communication complexity of finding the median can be further reduced. A subtle refinement
of the above protocol, due to Karchmer~\cite[Example 1.7, p.~6 and p.~168]{KN97}, 
and revised by Gasarch~\cite{Book}, 
works with $O(\log{n})$ communication complexity: its key idea is to make
comparisons in a bit-by-bit manner, but this requires careful bookkeeping
of the progress and here we omit the technical details.

\smallskip
We next describe a different (folklore) protocol,
running with $O(\log{n})$ communication complexity,
that we find simpler and subsequently refine for computing a mediocre element. 
The protocol implements a binary-search strategy and works in rounds.
Alice maintains a set $A' \subset A$ of elements that may still
be the median (initially $A'=A$) and Bob maintains a set $B' \subset B$ of elements that may still
be the median (initially $B'=B$).
Alice and Bob compute the medians of their current inputs ($a$ and $b$, respectively).
At this point we have $\min(a,b) \leq \xi \leq \max(a,b)$.
Alice and Bob aim to determine the order relation between $a$ and $b$
in order to halve their input in an appropriate manner. 

The protocol avoids sending these $\log{n}$-bit numbers at each round
by avoiding making a direct comparison between $a$ and $b$.
The players have an interval $[i,j]$, $i,j \in \NN$,
that contains the median (initially, $[i,j]=[1,n]$). 
The medians $a$ and $b$ are compared to the middle element $h= \lfloor (i+j)/2 \rfloor$, 
If $a=b=h$, this element is the median of $A \cup B$ and the protocol terminates.
Otherwise, if $a$ and $b$ are split by $h$, \ie, $a \leq h \leq b$ or 
$b \leq h \leq a$, then (by transitivity of $\leq$), the relation between $a$ and $b$ is
determined, and Alice and Bob halves their input accordingly (as in the earlier $O(\log^2{n})$
protocol). Otherwise, if $a$ and $b$ are on the same side of $h$, \ie,  $a, b \leq h$ or
$h \leq a, b$. For example, in the first case, the elements in the lower half of $A' \cup B'$
are $\leq h$ and the same holds for the median of $A' \cup B'$.
As such, both players shrink their common interval $[i,j]$ by (roughly) half:
the resulting interval is $[i,h]$ or $[h,j]$, respectively. 
Alice and Bob communicate each of the outcomes of the above tests in $O(1)$ bits. 
Each halving operation for $A'$ and $B'$ maintains the property that
$\xi = \Med(A \cup B) = \Med(A' \cup B')$. 

Let $\ell=\lceil \log{n} \rceil$. Note that after $2 \ell-1$ tests, either
Alice and Bob hold singleton sets (\ie, $|A'|=|B'|=1$), or
the common interval $[i,j]$ consists of a single integer $i=j$. 
If $|A'|=|B'|=1$, the smaller number is the median (or either, for equality), 
whereas if $i=j$, this number is the median. 
The number of bits exchanged before the last round of the protocol is $O(\log{n})$
and is $O(\log{n})$ in the last round. The resulting communication complexity is $O(\log{n})$.

\subsection{$k$ players} \label{ssec:k}

In this subsection we show the protocol that proves Theorem~\ref{thm:exact-k}.
It is worth noting that the number of players, $k$ is independent of $n$. 
The protocol maintains the invariant: the median of $\cup_{i=1}^k A_i$ in one round is
the same for the updated sets in the next round.
It is possible that the number of sets drops from $k$ to a lower number;
the protocol remains unchanged until the value $k=2$ is reached,
when the respective players apply the protocol in Subsection~\ref{ssec:two};
recall that padding with extra elements may be needed.
If the value $k=1$ is reached, the remaining player computes the median in his/her
own set and the game ends.

Initially, each player sorts his/her input set locally.
The sorted order is used by each player in the pruning process,
and if such action occurs, the sorted order is locally maintained.
Each set pruning discards elements at one of the two ends of the chain
(either low elements below some threshold, or high elements above some threshold). 

The protocol roughly halves the size of at least one of the current participating sets;
more precisely, for some $X \in \{A_1,\ldots,A_k\}$, we have
$|X'| \leq \left \lfloor |X|/2 \right \rfloor$ by the end of each round. 
Since the size of each set is initially $O\left(\text{poly}(n)\right)$, 
the size of each of the $k$ sets drops to $0$ in at most $O(\log{n})$ iterations
and consequently, the number of rounds is at most $O(k \log{n})$.
(Padding with extra elements when $k=2$ is reached conforms with this bound.)

Each round of the protocol works as follows.
Each player (locally) finds the median of his/her current set: $x_i \in A_i$, $i=1,\ldots,k$. 
The following scheme regarding medians is used: assume that there are $x$ sets of even size
and $y$ sets of odd size in the current round, where $x+y=k$; for the $x$ sets of even size
the first $\lceil x/2 \rceil$ use the lower median and the remaining $\lfloor x/2 \rfloor$
use the upper median (in some fixed, \eg, alphabetical, order). 
The idea of intermixing upper and lower medians is also present in~\cite{CD15}. 
(A scheme that uses only lower medians or only upper medians fails to guarantee 
that the median of the union is maintained after pruning,
for instance if $k=3$ and all three sets have even size; 
the smallest example of this kind is $|A_1|=|A_2|=|A_3|=2$.)

In the first round, each player posts his/her median on the communication board; this involves
$O(k \log{n})$ bits of communication.
In the remaining rounds, two players whose sets got pruned (as further explained below) need
to update their median on the communication board.
Depending on the parities of the sets of these two players before and after the pruning, at most
one more player may need to update his/her median to maintain the balanced scheme adopted earlier
which requires $\lceil x/2 \rceil$  use the lower median and the remaining
$\lfloor x/2 \rfloor$ use the upper median.
Therefore, in each round, the communication complexity is $O(\log{n})$.

All players are now able to determine the sorted order of the $k$ medians. 
For simplicity, assume that after relabeling, this order is
\begin{equation} \label{eq:AB}
  x_1 \leq x_2 \leq \ldots \leq x_k. 
\end{equation}
It is convenient to refer to the players holding the minimum and maximum of these medians 
as Alice and Bob and to their corresponding sets as $A$ and $B$:
$x_A \equiv x_1$ and $x_B \equiv x_k$ (this relabeling is only done for the purpose of analysis).

Let $P$ denote the poset made by the $k$ chains $A_1,\ldots,A_k$, together with the relations in
\eqref{eq:AB}. Write $a=|A|$, $b=|B|$, and $t=\sum_{i=1}^k |A_i|$.
The player holding the smaller set between Alice and Bob 
is in charge of the pruning operation in the current round: the same number of elements
is discarded by Alice and Bob as specified below. Refer to Fig~\ref{fig:1}. 

If $\min(a,b)=a$, Alice discards $\lceil a/2 \rceil$ elements in $A$ (all $x \leq x_A$ when $a$ 
is odd or $x_A$ is the lower median, or all $x < x_A$ when $x_A$ is the upper median),
and Bob discards the highest $\lceil a/2 \rceil$ elements in $B$.
Such operation is \emph{charged} to Alice.
Otherwise, if $\min(a,b)=b$, Bob discards $\lceil b/2 \rceil$ elements in $B$
(all $x \geq x_B$ when $b$ is odd
or $x_B$ is the upper median, or all $x > x_B$ when $x_B$ is the lower median).
Such operation is \emph{charged} to Bob.
It is worth noting that this scheme is feasible: \ie, if the indicated player discards
the specified number of elements, the other player can also discard the same number of elements.
Then the protocol continues with the next round.
Each player keeps track of the players that are still in the game and their set cardinalities,
as these can be deduced from the actions of the algorithm.

\begin{figure}[htbp]
\centering
\includegraphics[scale=0.89]{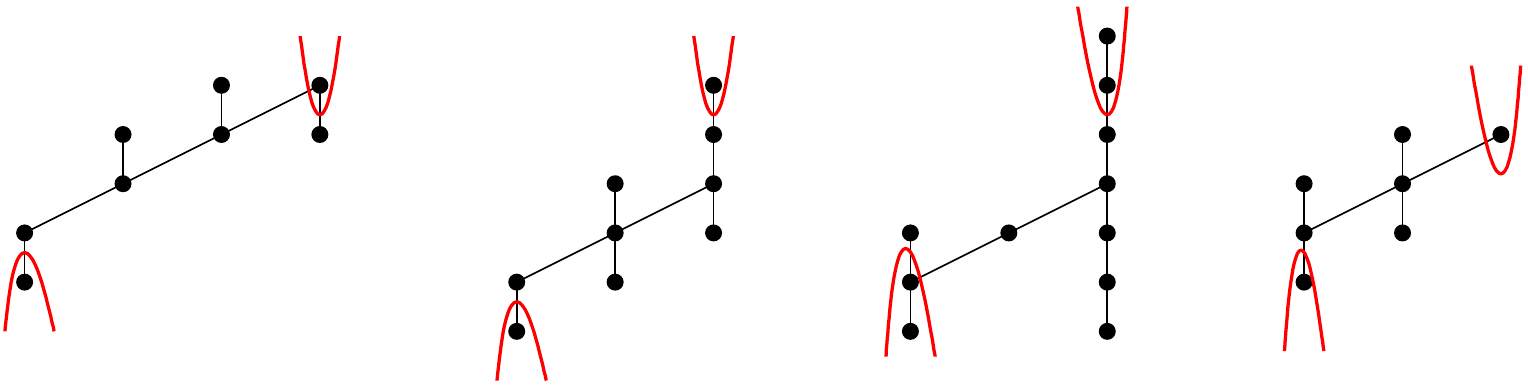}
\caption{Pruning the poset $P$ in the protocol for finding the median;
  Alice is the leftmost player and Bob is the rightmost player.
  (i) $k=4$, $t=8$, $u=6$, $v=5$; operation is charged to Alice.
  (ii) $k=3$, $t=9$, $u=6$, $v=7$; operation is charged to Alice.
  (iii) $t=11$, $u=6$, $v=8$; operation is charged to Alice.   
  (iv) $t=7$, $u=5$, $v=4$; operation is charged to Bob.}
\label{fig:1}
\end{figure}

It remains to show that the same number of elements is discarded from each side of the median
in each round. 
Let $u$ be the number of elements in $P$ that are above the highest discarded element of $A$,
and $v$ be the number of elements in $P$ that are below the lowest discarded element of $B$.
By slightly abusing notation, let $k$ denote the number of players in the current round
of the protocol (which may differ from the initial number). Specifically we prove the following.

\begin{lemma} \label{lem:discard}
  Consider a round of the protocol and assume that $k \geq 3$ and $t=\sum_{i=1}^k |A_i|$.
  The following inequalities for $u$ and $v$ hold:
  $u \geq \lceil \frac{t+1}{2} \rceil$ and $v \geq \lceil \frac{t}{2} \rceil$.
\end{lemma}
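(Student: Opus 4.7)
The plan is to prove Lemma~\ref{lem:discard} by a case analysis on which player is charged, reducing it to a count of elements of the poset $P$ that are certifiably above $d_A$ (the highest discarded element of $A$) or certifiably below $d_B$ (the lowest discarded element of $B$). I would assume without loss of generality that all elements are distinct by tie-breaking on player index and on position within a set.

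First I would verify a structural consequence of the pruning rule: in both charging scenarios, $d_A \leq x_A$ and $d_B \geq x_B$. This follows from $\lceil \min(a,b)/2 \rceil$ never exceeding the position of $x_A$ in $A$ nor the distance from the top of $B$ to $x_B$, checked across the three sub-cases of parity and median-type (odd, even-lower, even-upper). Combined with distinctness, this yields the strict comparisons $x_i > d_A$ for every $i \geq 2$ and $x_i < d_B$ for every $i \leq k-1$, so that every element of $A_i$ with value $\geq x_i$ (for $i \geq 2$) is provably above $d_A$, and symmetrically every element of $A_i$ with value $\leq x_i$ (for $i \leq k-1$) is provably below $d_B$.

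Second I would aggregate the counts. Let $|U_i|$ be the number of elements of $A_i$ that are $\geq x_i$; the scheme gives $|U_i| = \lceil |A_i|/2 \rceil$ when $|A_i|$ is odd or even-upper and $|U_i| = |A_i|/2 + 1$ when $|A_i|$ is even-lower, with $|L_i|$ defined symmetrically. A direct check shows that at least $\lfloor a/2 \rfloor$ elements of $A$ lie strictly above $d_A$ in every sub-case (and at least $\lfloor b/2 \rfloor$ elements of $B$ lie strictly below $d_B$), so $u \geq \lfloor a/2 \rfloor + \sum_{i=2}^k |U_i| = N_\geq - c_A$, where $N_\geq := \sum_i |U_i| = (t+y)/2 + \lceil x/2 \rceil$ (with $x$ and $y$ the numbers of even- and odd-sized sets) and $c_A \in \{0,1\}$ equals $1$ precisely when $d_A = x_A$, i.e.\ when $A$ is odd or even-lower. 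A mirror-image computation gives $v \geq N_\leq - c_B$ with $N_\leq = (t+y)/2 + \lfloor x/2 \rfloor$ and $c_B \in \{0,1\}$ equal to $1$ iff $B$ is odd or even-upper.

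To close the inequalities $u \geq \lceil (t+1)/2 \rceil$ and $v \geq \lceil t/2 \rceil$, I would reduce each via the parity identity $t \equiv y \pmod{2}$ to a small inequality in $x$ and $y$, then close it using $k \geq 3$ together with the balanced median scheme. The scheme supplies exactly the slack needed in the tight sub-cases: $\lceil x/2 \rceil \geq 1$ whenever the charged set is even-lower (since that set is itself an even-lower set) and $\lfloor x/2 \rfloor \geq 1$ whenever it is even-upper; moreover, if the charged set is odd then $y \geq 1$, and the joint parity constraint combined with $k \geq 3$ forces the remaining fractional slack. The main obstacle is the bookkeeping through the $3 \times 3$ grid of parity/median-type sub-cases for $A$ and $B$, and this is exactly the point at which the intermixing of upper and lower medians becomes indispensable, as noted in the text: without it, the $k=3$ all-even example collapses the bound, and the quantity $\lceil x/2 \rceil$ (resp.\ $\lfloor x/2 \rfloor$) would no longer be guaranteed to be positive in the critical sub-case. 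Case~2 (Bob charged) then follows by the mirror-image argument that exchanges the roles of $A$ and $B$ and flips ``above'' and ``below''.
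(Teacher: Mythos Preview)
Your approach is correct and is essentially the same counting argument as the paper's: both compute $\sum_i |U_i| = t/2 + y/2 + \lceil x/2\rceil$ (and the symmetric expression with $\lfloor x/2\rfloor$ for $v$), then subtract at most $1$ for the possible overcount at $x_A$ (resp.\ $x_B$), and finish via $x+y=k\geq 3$ and integrality. The main difference is packaging: the paper does not split on who is charged and does not carry out the $3\times 3$ parity/median-type grid; it simply observes that the highest discarded element of $A$ is always $\le x_A$ and the lowest discarded element of $B$ is always $\ge x_B$, uses $-1$ as a uniform worst-case correction (your $c_A,c_B\le 1$), and closes with $u\geq (t+k-2)/2\geq (t+1)/2$ and $v\geq (t+k-3)/2\geq t/2$ in one line each. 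Your finer case analysis is valid but unnecessary for the stated bound, and your ``Case~2 by mirror image'' is likewise subsumed: both $u$ and $v$ are bounded by the same uniform argument regardless of which player is charged.
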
 
\begin{proof}
For $u$, we start by including $|A_i|/2$ corresponding to the upper half elements in the 
set $A_i$, for $i=1,\ldots, k$; this contributes $t/2$ to the sum. 
In addition we add $1/2$ for each set of odd size, thus $y/2$ over all odd sets.
Then we add $1$ for each set of even size that uses the lower median,
thus $\lceil x/2 \rceil$ over all even sets.
This procedure overcounts by $1$ if the median $x_A$ is the highest discarded element of $A$.
Therefore, we have
\[
u \geq \frac{t}{2}+\frac{y}{2}+\left\lceil \frac{x}{2}\right\rceil -1 
\geq \frac{t}{2}+\frac{y}{2}+ \frac{x}{2} -1 
=\frac{t+x+y-2}{2} =\frac{t+k-2}{2} \geq \frac{t+1}{2}.
\]

Similarly, for $v$, we start by including $|A_i|/2$ corresponding to the lower half elements in the 
set $A_i$, for $i=1,\ldots, k$; this contributes $t/2$ to the sum. 
In addition we add $1/2$ for each set of odd size, thus $y/2$ over all odd sets.
Then we add $1$ for each set of even size that uses the upper median,
thus $\lfloor x/2 \rfloor$ over all even sets.
This procedure overcounts by $1$ if the median $x_B$ is the lowest discarded element of $B$.
Therefore, we have
\[
v \geq \frac{t}{2}+\frac{y}{2}+\left\lfloor \frac{x}{2}\right\rfloor -1
\geq \frac{t}{2}+\frac{y}{2}+\frac{x-1}{2}-1
=\frac{t+x+y-3}{2} =\frac{t+k-3}{2} \geq \frac{t}{2}.
\]

Since both $u$ and $v$ are integers, we have thereby proved that
$u \geq \lceil \frac{t+1}{2} \rceil$ and $v \geq \lceil \frac{t}{2} \rceil$, as required.
\end{proof}

\subparagraph{Proof of Theorem~\ref{thm:exact-k}.}
By Lemma~\ref{lem:discard}, all the elements discarded from $A$ are below the median (of the union),
and all elements discarded from $B$ are above the median.
Thus in each round, the protocol preserves the median and discards the same number of elements
from each side of it. This proves the invariant of the protocol.
Since the protocol takes $O(k \log{n})$ rounds and the communication complexity of each
round is $O(\log{n})$, the overall communication complexity is $O(k \log^2{n})$, as claimed.
\qed

\section{Approximate selection with $k$ players} \label{sec:approx-k}

In this section we consider the problem of finding an $(\alpha t, \alpha t)$-mediocre element
among $k$ players, where $\alpha \in (0,1/2)$ is a fixed constant.
Recall that in the setting of Theorem~\ref{thm:approx-k}, the sets $A_i$, $i=1,\ldots, k$,
are pairwise disjoint. But we do \emph{not} assume that they have the same cardinality.

The protocol works in rounds.
Let $a_1=1$ and $b_1=n$; and note that $[a_1,b_1]$ 
contains the median $m$, \ie, the $\left \lceil t/2 \right \rceil$-th smallest element 
of $\cup_{i=1}^k A_i$.
For round $j=1,2,\ldots$,
the interval $[a_{j+1},b_{j+1}]$ is obtained from
the interval $[a_j, b_j]$ by halving while
maintaining the invariant that $m\in [a_{j+1},b_{j+1}]$.
Equivalently, the invariant can be stated as follows.
For $j=1,2,\ldots$,
\begin{itemize}
\item
the number of elements in $\cup_{i=1}^k A_i$ that are 
$\leq a_j$ is less than $\left\lceil t/2 \right \rceil$,
\item
the number of elements in $\cup_{i=1}^k A_i$ that are 
$\leq b_j$ is at least $\left \lceil t/2 \right \rceil$. 
\end{itemize}

More specifically, in round $j$, let
\[ c_{j} = \left \lfloor \frac{a_{j} + b_{j}}{2} \right \rfloor. \]
Each player communicates the number of elements in his/her set that
are $\leq c_{j}$. Since there are $k$ players, this takes $O(k \log{n})$ bits.
Once this is done, each player can compute independently (by adding the $k$
individual counts) the total number of elements in $\cup_{i=1}^k A_i$ that are 
$\leq c_{i}$.
If the number is less than $\left \lceil t/2 \right \rceil$,
then we set $[a_{j+1},b_{j+1}] := [c_{j},b_j]$,
otherwise, \ie, the number is at least $\left \lceil t/2 \right \rceil$,
then we set $[a_{j+1},b_{j+1}] := [a_j,c_{j}]$.
This setting maintains the invariant.

The protocol repeatedly halves the current interval until
\begin{equation} \label{eq:short}
  b_j - a_j \leq \left( \frac12 - \alpha \right) t.
\end{equation}
When this occurs, since $\cup_{i=1}^k A_i$ consists of distinct elements,
$[a_j,b_j]$ contains a continuous range of no more than $\left( \frac12 - \alpha \right) t$
elements of $\cup_{i=1}^k A_i$, with $m$ being one of them.
If $(0.5-\alpha)t<1$, then the protocol stops when $b_j-a_j=1$ and returns $b_j$ as the median.

Let $z$ be any element of $\cup_{i=1}^k A_i$ contained in $[a_j,b_j]$. 
(The protocol will return one such element, as explained below.)
Observe that 
\begin{align} \label{eq:rank-any}
  \frac{t}{2} - \left( \frac12 - \alpha \right) t
  &\leq \rank_{\cup A_i}(z) \leq
  \frac{t}{2} + \left( \frac12 - \alpha \right) t, \text{ or } \nonumber\\
\alpha t &\leq \rank_{\cup A_i}(z) \leq \left(1 - \alpha \right) t.
\end{align}

The number of halving rounds needed to achieve the interval-length
in~\eqref{eq:short} is at most
\begin{align*}
  \left \lceil \log \frac{n}{\left( \frac12 - \alpha \right) t} \right \rceil 
  &\leq \left \lceil \log \frac{n}{\left( \frac12 - \alpha \right) cn} \right \rceil =
  \left \lceil \log \frac{1}{\left( \frac12 - \alpha \right) c}  \right \rceil 
  = \left \lceil \log \frac{2q}{(q-2p) c} \right \rceil  \\
  &\leq \left \lceil \log \frac{2q}{c}  \right \rceil = \ell = O(1). 
\end{align*}

In each round, the $k$ players communicate their counts, $O(k \log{n})$ bits in total. 
Each player independently computes the total count for the midpoint of the current interval,
and all players take the same decision on how to set the next interval in the halving process
(with no further communication needed).

In the last round (\ie, when inequality~\eqref{eq:short} is satisfied), the players report in turn.
If the player does not hold any element in the interval $[a_j,b_j]$,
he/she outputs a zero bit and the report continues;
otherwise the player outputs such an element (from his/her set) in $O(\log{n})$ bits and
the protocol ends. The output element is a valid choice, as justified by~\eqref{eq:rank-any}. 

The total communication complexity is therefore
$ O(\ell \, k \log{n}) = O(k \log{n})$ bits, as claimed. 
This concludes the proof of Theorem~\ref{thm:approx-k}. 
\qed

\section{Approximate selection with two players under special conditions} \label{sec:approx-2}

Let $t=s+m$ denote the total number of elements in $A \cup B$. 
Here we consider the problem of finding an $(\alpha t, \alpha t)$-mediocre element between two players,
where $\alpha \in (0,1/2)$ is a fixed constant. The protocol described in Subsection~\ref{ssec:two}
immediately yields the following.

\begin{corollary} \label{cor:mp:cc}
  The deterministic communication complexity of finding an $(\alpha t, \alpha t)$-mediocre
  element in $A \cup B \subset [n]$, where $t=|A|+|B|$ and 
  $\alpha \in (0,1/2)$ is a fixed constant, is  $O(\log{n})$.
\end{corollary}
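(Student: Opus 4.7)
The plan is to invoke the $O(\log n)$-bit exact-median protocol of Subsection~\ref{ssec:two} as a black box and then observe that the exact median of $A \cup B$ is automatically $(\alpha t, \alpha t)$-mediocre whenever $\alpha < 1/2$ is a fixed constant. Concretely, Alice and Bob run the folklore binary-search protocol described at the end of Subsection~\ref{ssec:two}, which outputs $\xi = \Med(A,B)$ with $O(\log n)$ bits of communication. It then remains only to verify that $\xi$ satisfies the mediocrity condition, after which either player can simply declare $\xi$ as the answer.

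Since $\xi$ is the $\lceil t/2 \rceil$-th smallest element of the multiset $A \cup B$, the requirement that $\xi$ be $(\alpha t, \alpha t)$-mediocre is equivalent to the pair of inequalities $\alpha t < \lceil t/2 \rceil \leq (1-\alpha) t$. Because $\alpha < 1/2$ is fixed, both hold for all sufficiently large $t$; the remaining bounded range of $t$ is either rendered vacuous by the floor/ceiling conventions stated in the \emph{Notation} paragraph of Section~\ref{sec:intro} (when $\alpha t < 1$) or else dispatched trivially by having one player transmit the entire input to the other in $O(\log n)$ bits. Either way, no additional communication beyond the median protocol is needed.

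An alternative and arguably more idiomatic plan would be to truncate the halving phase of the folklore protocol as soon as the current candidate interval $[i,j]$ has been shrunk enough that the rank range it covers lies inside $[\alpha t, (1-\alpha)t]$; any input element inside $[i,j]$ is then $(\alpha t, \alpha t)$-mediocre. This saves an additive $\Theta(\log(1/(\tfrac12 - \alpha)))$ rounds and is essentially the template refined in Theorems~\ref{thm:approx-k} and~\ref{thm:approx-2}. For the corollary at hand the unmodified median protocol already meets the stated $O(\log n)$ bound, so no real obstacle remains and the refinement is unnecessary.
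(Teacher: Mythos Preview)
Your proposal is correct and matches the paper's approach exactly: the paper offers no separate proof for this corollary, merely stating that ``the protocol described in Subsection~\ref{ssec:two} immediately yields'' it. Your first paragraph is precisely that argument, and your handling of the small-$t$ edge cases (which the paper sweeps under the floor/ceiling convention) is sound; the alternative early-termination idea you sketch is indeed the template the paper later develops in Theorems~\ref{thm:approx-k} and~\ref{thm:approx-2}, but as you note it is unnecessary here.
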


Interestingly enough, this communication complexity can be brought down to a constant
under slightly stronger assumptions:
(i)~$A$ and $B$ have no duplicates or common elements, and
(ii)~$|A \cup B| \geq cn$, for some constant $c>0$;
and a somewhat relaxed requirement:
at least one of the players returns an element to the process that has invoked
his/her service; each element returned is $(\alpha t, \alpha t)$-mediocre.
Note that this is a natural relaxation --- if the set of one player does not contain any suitable element,
it is impossible to communicate the final answer to this player within $O(1)$ complexity.

A natural protocol to consider would be to choose one of the median-finding protocols
and execute a constant number of rounds from it. However, this seemingly promising idea
does not appear to work. It is possible that one of the two sets, say $A$, does not contain
any desired elements, namely  $(\alpha t, \alpha t)$-mediocre for the given $\alpha$
and so at the end of the modified protocol only $B'$ contains desired elements (and not $A'$).
More importantly, the players apparently have no indication of which player is the lucky one.
We therefore resort to a different idea of using quantiles (more precisely, a sampling technique
with a similar effect).

\subparagraph{Proof of Theorem~\ref{thm:approx-2}.}
We may assume, without loss of generality that $n$ and $1/c$ are powers of $2$
(in particular, $4n$ is also a power of $2$). 
For $n < 8q^2/c$ Alice and Bob use the earlier $O(\log{n})$-protocol for finding the median; 
we therefore subsequently assume that $n \geq 8q^2/c$.
In particular, since $q \geq 3$, we have $n \geq 24q/c$.
We further assume, without loss of generality that $|A|=|B|=m$: this can be achieved
by padding the smaller size set with the appropriate numbers of small elements and
large elements as described below. 
In particular, the padding elements need also be distinct.   
(It is \emph{not} assumed that the common size is a power of $2$: since our protocol
does not exactly halve the current set of each player at each round, such an assumption
would be of no use.)

To illustrate the padding process for arbitrary set sizes, 
we may assume without loss of generality that the given input satisfies: $s=|A| \leq |B|=m$. 
We need to pad Alice's input with
$m -\lceil \frac{m+s}{2} \rceil$ small elements and
$\lceil \frac{m+s}{2} \rceil -s$
%$2m - \left( m+s  + m - \lceil \frac{m+s}{2} \rceil\right) = \lceil \frac{m+s}{2} \rceil -s$
large elements.
Alice and Bob replace their inputs by $A+n$ and $B+n$, respectively;
as a result, the elements they hold are now in the range $\{n+1,\ldots,2n\}$.
Then Alice pads her input with
$\{1,2,\ldots,m -\lceil \frac{m+s}{2} \rceil\} \subset [n]$ and
$\{2n+1,\ldots,2n+ \lceil \frac{m+s}{2} \rceil -s\} \subset [3n] \setminus [2n]$.
(Note that
$ \left\lceil \frac{m+s}{2}\right\rceil-s = m-\left\lfloor\frac{m+s}{2}\right\rfloor$.)
The resulting sets have the same size $m$ and $A \cup B$ consists of distinct elements
in the range $[3n] \subset [4n]$.
By subtracting $n$, the element(s) returned by the protocol are shifted back
to the original range $[n]$ in the end (without explicitly mentioning it there). 

$A$ and $B$ below denote the (new) padded sets (of size $m$). 
Set $h = \lceil \frac{2q}{q-2p} \rceil$ (recall that $\alpha=p/q$)
and $\ell= \lceil \log{\frac{12h}{c}} \rceil$.
By the assumption $n \geq 24q/c$ we have
\[ cn \geq 24q \geq 12 \left \lceil \frac{2q}{q-2p} \right \rceil =12h. \]
Let $Q_A$ be the set consists of the $i \lfloor m/h \rfloor$-th elements of $A$, for $i=1,2,\ldots,h$
(this set resembles the $h$-th quantiles of $A$).  Similarly,
let $Q_B$ be the set consists of the $i \lfloor m/h \rfloor$-th elements of $B$, for $i=1,2,\ldots,h$
(this set resembles the $h$-th quantiles of $B$).
Note that $|Q_A|=|Q_B|=h$. 
Since $A$ and $B$ consist of pairwise distinct elements, 
between any two elements in $Q_A$ (or $Q_B$), there are at least
\[ \left \lfloor \frac{m}{h}  \right \rfloor \geq \frac{m}{h} -1
\geq \frac{t}{2h} -1 \geq \frac{cn}{2h} -1 \geq \frac{cn}{3h} \geq \frac{4n}{2^\ell} \]
elements. 
Represent each element $x$ in $Q_A$ (and $Q_B$) with $\log (4n) = \log{n} + 2$ bits;
it follows that the elements in $\{\p_\ell(x) : x \in Q_A\}$ are pairwise distinct;
similarly the elements in $\{\p_\ell(y) : y \in Q_B\}$ are pairwise distinct. 

The protocol implements a binary-search strategy aimed at finding the median of 
$Q_A \cup Q_B$. Note that $|Q_A| =|Q_B| \leq h$.  
Alice maintains a set $Q'_A \subset Q_A$ of elements that may still
be the median quantile (initially $Q'_A=Q_A$)
and Bob maintains a set $B' \subset B$ of elements that may still be the median quantile
(initially $Q'_B=Q_B$). The invariant $|Q'_A| =|Q'_B|$ will be maintained. 
At each round, Alice and Bob compute the medians of their current sets
($x_A$ and $x_B$, respectively).
If $\p_\ell(x_A)< \p_\ell(x_B)$ or $\p_\ell(x_A)> \p_\ell(x_B)$ the protocol continues
with Alice and Bob halving their input as in the median-finding protocol.
Specifically, if $\p_\ell(x_A)< \p_\ell(x_B)$ the protocol discards the 
$\left \lfloor |Q'_A|/2 \right \rfloor$ lower elements of $Q'_A$ and the 
$\left \lfloor |Q'_B|/2\right \rfloor$ upper elements of $Q'_B$.
The equality case $\p_\ell(x_A) = \p_\ell(x_B)$ is addressed below.
Observe that the above comparison can be resolved by exchanging $\ell$ bits in each round.

If $\p_\ell(x_A) = \p_\ell(x_B)$, and $|Q'_A|=|Q'_B| \geq 3$, we have
$\p_\ell(\pred(x_A)) <  \p_\ell(x_B)$, and the protocol discards the 
$\left \lfloor (|Q'_A|-1)/2 \right \rfloor$ lower elements of $Q'_A$ and the 
$\left \lfloor (|Q'_B|-1)/2\right \rfloor$ upper elements of $Q'_B$.
Note that this is a slight but important deviation from the standard median-finding protocol;
it is aimed at handling prefix equality by discarding possibly one fewer element by each player.
With this choice, the median of $Q_A \cup Q_B$ remains the median of $Q'_A \cup Q'_B$;
and the invariant $|Q'_A| =|Q'_B|$ is maintained.
Since the sets the players hold are almost halved at each round, the
protocol terminates in $O(\log{h})$ rounds, as specified below.

If $|Q'_A|=|Q'_B| = 2$, and $\p_\ell(x_A) \neq \p_\ell(x_B)$, the protocol continues
with  each player halving his/her own current set accordingly.
If $|Q'_A|=|Q'_B| = 2$, and $\p_\ell(x_A) = \p_\ell(x_B)$, the protocol
terminates with each player output his/her number ($x_A$ and $x_B$, respectively). 
Observe that in this case, the median of $Q_A \cup Q_B$ is $x_A$ or $x_B$
and it will be shown below, see~\eqref{eq:rank}, 
that both elements are $(\alpha t, \alpha t)$-mediocre.

If $|Q'_A|=|Q'_B| =1$ and $\p_\ell(x_A) \neq \p_\ell(x_B)$, the protocol
terminates with the player that holds the smaller of $x_A$ and $x_B$ output that number. 
If $|Q'_A|=|Q'_B| =1$ and $\p_\ell(x_A) = \p_\ell(x_B)$, the protocol
terminates with each player output his/her number ($x_A$ and $x_B$, respectively).
It will be shown below, see~\eqref{eq:rank}, 
that both elements are $(\alpha t, \alpha t)$-mediocre.

Recall that $\ell= \lceil \log{\frac{12h}{c}} \rceil$.
If $x,y \in [3n]$ and $\p_\ell(x) = \p_\ell(y)$ then
\begin{equation} \label{eq:xy}
  |x-y| \leq \frac{3n}{2^\ell} \leq \frac{cn}{4h} \leq \frac{t}{4h}.
\end{equation}

Recall that the median of $Q_A \cup Q_B$ is in $Q'_A \cup Q'_B$
in the last round of the protocol.
Since all elements are distinct, for $x_A$ and $x_B$ above,
if  $\p_\ell(x_A) = \p_\ell(x_B)$, Inequality~\eqref{eq:xy} implies
\begin{equation} \label{eq:rank-xy}
  |\rank_{A \cup B}(x_A) -\rank_{A \cup B}(x_B)| \leq \frac{t}{4h}.
\end{equation}

Assume that the median of $Q_A \cup Q_B$ is $x_A \in Q_A$; then Alice returns $x_A$.
In addition, if  $\p_\ell(x_A) = \p_\ell(x_B)$, Bob also returns $x_B \in Q_B$.
Since $x_A$ is the median of $Q_A \cup Q_B$, it is the $h$-th smallest element of $Q_A \cup Q_B$.
As such (by construction):
(i)~$x_A$ is $\geq$ than at least
\[ h  \left \lfloor \frac{m}{h}  \right \rfloor \geq h \left( \frac{m}{h} -1 \right) =m - h \]
elements of $A \cup B$;
and similarly, (ii)~$x_A$ is $\leq$ than at least $m-h$ elements of $A \cup B$.
Note that the median of $A \cup B$ has rank $m$ and is the same as the median of the
original union of the two sets. See Fig.~\ref{fig:2}.

\begin{figure}[htbp]
\centering
\includegraphics[scale=0.87]{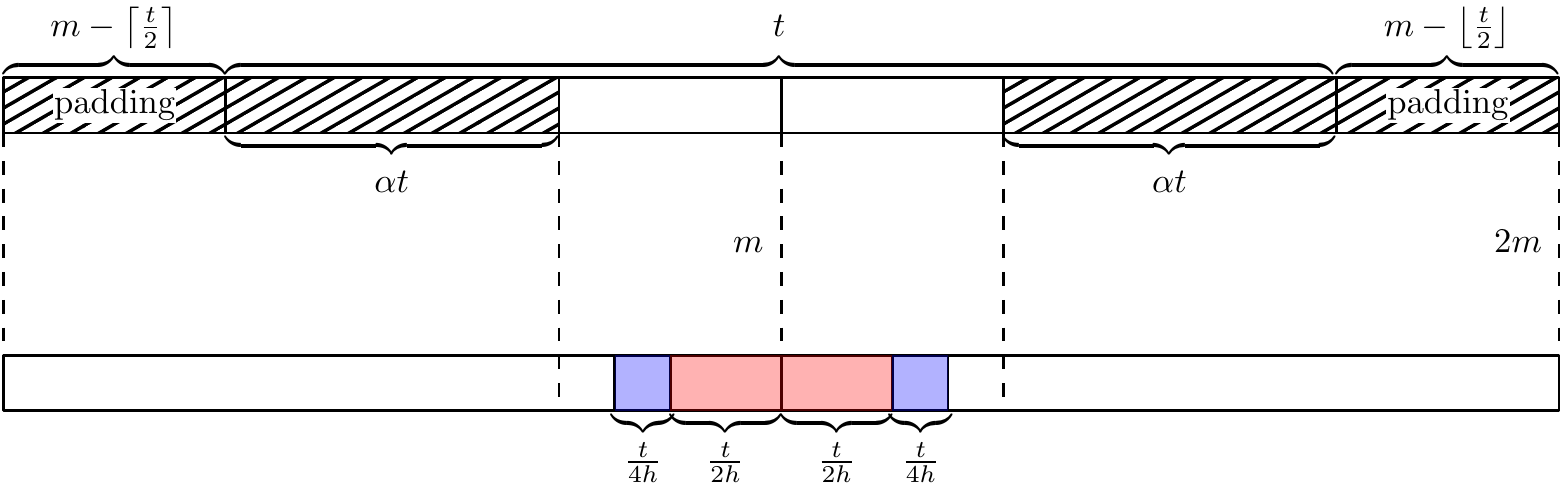}
\caption{Above: Illustration of the original union of the two input sets
  with padding elements. The players need to find elements from the unshaded region
  in the middle.
  Below: The median $x$ of $Q_A\cup Q_B$ lies within the red region.
  If the other player has an element $y$ such that $\p_\ell(y)=\p_\ell(x)$,
  then $y$ lies in the union of the red and blue regions,
  therefore it is also a valid output.}
\label{fig:2}
\end{figure}

Observe that $h = \lceil \frac{2q}{q-2p} \rceil \leq 2q$ which yields
$ 2 h^2 \leq 8q^2 \leq cn \leq t$ (recall that $n \geq 8q^2/c$). 
This implies
\begin{equation} \label{eq:rank-x_A}
  \left| \rank_{A \cup B}(x_A) -m \right| \leq h \leq \frac{t}{2h}.
\end{equation}
Recall that if  $\p_\ell(x_A) = \p_\ell(x_B)$, Bob also returns $x_B \in Q_B$
and Inequality~\eqref{eq:rank-xy} applies. 

From~\eqref{eq:rank-xy} and~\eqref{eq:rank-x_A} we deduce that
the rank of any output element $z$ satisfies (recall that $t=s+m$):
\begin{equation} \label{eq:rank}
\left| \rank_{A \cup B}(z) - m \right|
\leq \frac{t}{4h} + \frac{t}{2h} \leq \frac{t}{h}
\leq \frac{(q-2p)t}{2q}= \left(\frac12 -\alpha \right) t.
\end{equation}

As such, each output element $z$ is an $(\alpha t, \alpha t)$-mediocre element of
the original union of the two sets. 
The elements returned are $x_A$ or $x_B$ (or both). 
Alice may return $x_A$ and Bob may return $x_B$ to the processes
that have invoked their service; the elements returned by the players could be different.
Since $q=O(1)$, we have $h, \ell = O(1)$.
The number of bits exchanged is $\ell+O(1)=O(1)$ in each of the $O(\log{h})$ rounds of the protocol.
The overall communication complexity is $O(\ell \, \log{h}) =O(1)$, as claimed. 
\qed

\section{Conclusion} \label{sec:conclusion}

An obvious question is whether the three-party communication complexity of median computation
can be reduced to $O(\log{n})$.
A more general question is whether the $k$-party communication complexity
of median computation, $k \geq 3$, can be reduced to $O(k \log{n})$.
%or a $\omega(k \log{n})$ lower bound exists.
We believe that the answers to both questions are in the negative. 
Another interesting question regarding the two-party communication complexity of approximate
selection is whether the conditions in Theorems~\ref{thm:approx-k} and~\ref{thm:approx-2}
can be relaxed.


\begin{thebibliography}{99}
%\itemsep 3.5pt

\bibitem{AHU83} A.~V.~Aho, J.~E.~Hopcroft, and J.~D.~Ullman,
\emph{Data Structures and Algorithms},
Addison--Wesley,  Reading, Massachusetts, 1983.

\bibitem{AKSS89} M.~Ajtai, J.~Koml{\'o}s,  W.~L.~Steiger, and E.~Szemer{\'e}di,
Optimal parallel selection has complexity $O(\log \log {n})$, 
\emph{Journal of Computer and System Sciences}
\textbf{38(1)} (1989), 125--133.

\bibitem{Al17} A.~Alexandrescu,
Fast deterministic selection,
\emph{Proceedings of the 16th International Symposium on Experimental Algorithms}
(SEA 2017), June 2017, London, pp.~24:1--24:19.

\bibitem{Ba88} S.~Baase,
\emph{Computer Algorithms: Introduction to Design and Analysis},
2nd~edition, Addison-Wesley, Reading, Massachusetts, 1988.

\bibitem{BCC+00} S.~Battiato, D.~Cantone, D.~Catalano, G.~Cincotti, and M.~Hofri,
An efficient algorithm for the approximate median selection problem,
\emph{Proceedings of the 4th Italian Conference on Algorithms and
Complexity} (CIAC 2000), LNCS vol.~1767, Springer, 2000, pp.~226--238.

\bibitem{BJ85} S.~W.~Bent and J.~W.~John,
Finding the median requires $2n$ comparisons,
\emph{Proceedings of the 17th Annual ACM Symposium on Theory of
Computing} (STOC 1985), ACM, 1985, pp.~213--216.

\bibitem{BFP+73} M.~Blum, R.~W.~Floyd, V.~Pratt, R.~L.~Rivest, and R.~E.~Tarjan,
Time bounds for selection,
\emph{Journal of Computer and System Sciences}
\textbf{7(4)} (1973), 448--461.

\bibitem{CD15} K. Chen and A. Dumitrescu,
Selection algorithms with small groups,
{\em International Journal of Foundations of Computer Science}, to appear. 
A preliminary version in 
{\em Proc. 29th International Symposium on Algorithms and Data Structures} (WADS 2015),
Victoria, Canada, August 2015,
Vol. 9214 of LNCS, Springer, pp.~189--199.    
Preprint available at \url{arXiv.org/abs/1409.3600}.

\bibitem{CT87} 
F. Chin and H. F. Ting,
An improved algorithm for finding the median distributively,
\emph{Algorithmica}
\textbf{2} (1987), 235--249.

\bibitem{CLRS09} T.~H.~Cormen, C.~E.~Leiserson, R.~L.~Rivest, and C.~Stein,
\emph{Introduction to Algorithms}, 
3rd~edition, MIT Press, Cambridge, 2009.

\bibitem{CM89} W.~Cunto and J.~I.~Munro,
Average case selection,
\emph{Journal of ACM}
\textbf{36(2)} (1989), 270--279.

\bibitem{DPV08} S.~Dasgupta, C.~Papadimitriou, and U.~Vazirani,
\emph{Algorithms}, 
Mc Graw Hill, New York, 2008.

\bibitem{DHU+01} D.~Dor, J.~H\r{a}stad, S.~Ulfberg, and U.~Zwick,
On lower bounds for selecting the median,
\emph{SIAM Journal on Discrete Mathematics}
\textbf{14(3)} (2001), 299--311.

\bibitem{DZ96} D.~Dor and U.~Zwick,
Finding the $\alpha n$-th largest element,
\emph{Combinatorica} 
\textbf{16(1)} (1996), 41--58.

\bibitem{DZ99} D.~Dor and U.~Zwick,
Selecting the median,
\emph{SIAM Journal on Computing} 
\textbf{28(5)} (1999), 1722--1758.

\bibitem{Du16} A. Dumitrescu,
  A selectable sloppy heap,
{\em Algorithms} \textbf{12(3)} (2019), 58; special issue on efficient data structures;
doi:10.3390/a12030058.  

\bibitem{Du19} A. Dumitrescu,
Finding a mediocre player,
\emph{Proc. 11th International Conference on Algorithms and Complexity} 
(CIAC 2019), LNCS~11485, Springer, 2019, pp.~212--223.

\bibitem{EW19} S. Edelkamp and A. Wei{\ss},
Worst-case efficient sorting with QuickMergesort,
\emph{Proc. 21st Workshop on Algorithm Engineering and Experiments (ALENEX 2019)},
pp.~1--14.
% QuickMergesort: Practically efficient constant-factor optimal sorting,
% preprint available at \url{arXiv.org/abs/1804.10062}.

\bibitem{FR75} R.~W.~Floyd and R.~L.~Rivest,
Expected time bounds for selection,
\emph{Communications of ACM}
\textbf{18(3)} (1975), 165--172.

\bibitem{FG79} F.~Fussenegger and H.~N.~Gabow,
A counting approach to lower bounds for selection problems,
\emph{Journal of ACM}
\textbf{26(2)} (1979), 227--238.

\bibitem{HS69} A.~Hadian and M.~Sobel,
Selecting the $t$-th largest using binary errorless comparisons,
\emph{Combinatorial Theory and Its Applications}
\textbf{4} (1969), 585--599.

\bibitem{Ho61} C.~A.~R.~Hoare,
Algorithm 63 (PARTITION) and algorithm 65 (FIND),
\emph{Communications of the ACM}
\textbf{4(7)} (1961), 321--322.

\bibitem{Hy76} L.~Hyafil,
Bounds for selection,
\emph{SIAM Journal on Computing} 
\textbf{5(1)} (1976), 109--114.

\bibitem{J88} J.~W.~John,
A new lower bound for the set-partitioning problem,
\emph{SIAM Journal on Computing} 
\textbf{17(4)} (1988), pp.~640--647.

\bibitem{KKZZ18} H. Kaplan, L. Kozma, O. Zamir, and U. Zwick,
Selection from heaps, row-sorted matrices and {X+Y} using soft heaps,
\emph{Proc. 2nd Symposium on Simplicity in Algorithms (SOSA 2019)},
Open Access Series in Informatics, 2018, vol. 69, pp.~5:1--5:21.
%preprint 2018, available at \url{http://arxiv.org/abs/1802.7041}.

\bibitem{Ki81} D.~G.~Kirkpatrick,
A unified lower bound for selection and set partitioning problems,
\emph{Journal of ACM}
\textbf{28(1)} (1981), 150--165.

\bibitem{KT06} J.~Kleinberg and \'E.~Tardos,
\emph{Algorithm Design},
Pearson \& Addison--Wesley,  Boston, Massachusetts, 2006.

\bibitem{Kn98} D.~E.~Knuth,
\emph{The Art of Computer Programming, Vol. 3: Sorting and Searching},
2nd~edition, Addison--Wesley,  Reading, Massachusetts, 1998.

\bibitem{KN97} E. Kushilevitz and N. Nisan,
\emph{Communication Complexity},
Cambridge University Press, New York, 1997.

\bibitem{Book} \url{http://www.cs.technion.ac.il/~eyalk/book.html}

\bibitem{Ma93} S. L. Mantzaris,
On “an improved algorithm for finding the median distributively”[1],
\emph{Algorithmica}
\textbf{10(6)} (1993), 501--504.
  
\bibitem{Pa96} M.~Paterson,
Progress in selection,
\emph{Proceedings of the 5th Scandinavian Workshop on Algorithm Theory (SWAT 1996)}, 
LNCS vol.~1097, Springer, 1996, pp.~368--379.

\bibitem{RY19} 
A. Rao and A. Yehudayoff,
\emph{Communication Complexity},
 \url{https://homes.cs.washington.edu/~anuprao/pubs/book.pdf}. 

\bibitem{Ro82} M.~Rodeh,
Finding the median distributively,
\emph{Journal of Computer and System Sciences}
\textbf{24(2)} (1982), 162--166.

\bibitem{SPP76} A.~Sch\"{o}nhage, M.~Paterson, and N.~Pippenger,
Finding the median,
\emph{Journal of Computer and System Sciences}
\textbf{13(2)} (1976), 184--199.

\bibitem{Yao79} A.~C.~Yao,
  Some complexity questions related to distributive computing (preliminary report),
  \emph{Proc. 11h Annual ACM Symposium on Theory of Computing} (STOC 1979),
 ACM, 1979, pp.~209--213.

\bibitem{Yao74} F.~Yao,
On lower bounds for selection problems,
Technical report {MAC TR-121}, 
Massachusetts Institute of Technology, Cambridge, 1974.

\bibitem{Yap76} C.~K.~Yap,
New upper bounds for selection,
\emph{Communications of the ACM}
\textbf{19(9)} (1976), 501--508.

\end{thebibliography}
\end{document}